\documentclass{article}

\usepackage{arxiv}

\usepackage[utf8]{inputenc}
\usepackage[T1]{fontenc}
\usepackage{microtype}

\usepackage{amsmath,amssymb,amsfonts,amsthm,mathtools,bm}
\numberwithin{equation}{section}

\usepackage{booktabs}
\usepackage{enumitem}
\usepackage[numbers,sort&compress]{natbib}
\usepackage[hidelinks]{hyperref}
\usepackage[nameinlink,noabbrev]{cleveref}
\usepackage{url}
\usepackage{doi}

\hypersetup{
pdftitle={
Contextual Fraction on Permutation Gain Graphs:
Exact Algorithms, Query Lower Bounds, and Dynamic Maintenance
},
pdfauthor={Ronald Katende},
pdfsubject={
Exact and dynamic computation of contextual fraction on
permutation gain graphs
},
pdfkeywords={
contextual fraction,
permutation gain graphs,
holonomy,
exact algorithms,
query complexity,
dynamic data structures,
constraint satisfaction
}
}

\newcommand{\D}{\mathcal{D}}
\newcommand{\supp}{\operatorname{supp}}
\newcommand{\Fix}{\operatorname{Fix}}
\newcommand{\Sym}{\operatorname{Sym}}
\newcommand{\id}{\operatorname{id}}
\newcommand{\CF}{\operatorname{CF}}
\newcommand{\NCF}{\operatorname{NCF}}
\newcommand{\CSP}{\operatorname{CSP}}
\newcommand{\Mov}{\operatorname{Mov}}
\newcommand{\pushf}[2]{\left(#1\right)_{\!*}#2}
\newcommand{\1}{\mathbf{1}}

\newtheorem{theorem}{Theorem}[section]
\newtheorem{lemma}[theorem]{Lemma}
\newtheorem{proposition}[theorem]{Proposition}
\newtheorem{corollary}[theorem]{Corollary}

\theoremstyle{definition}
\newtheorem{definition}[theorem]{Definition}
\newtheorem{example}[theorem]{Example}

\theoremstyle{remark}
\newtheorem{remark}[theorem]{Remark}
\newtheorem{openproblem}[theorem]{Open Problem}

\title{
Contextual Fraction on Permutation Gain Graphs:
Exact Algorithms, Query Lower Bounds, and Dynamic Maintenance
}

\author{
Ronald Katende \\
Department of Mathematics \\
Kabale University \\
Kikungiri Hill, Katuna Road, Kabale, Uganda \\
\texttt{rkatende@kab.ac.ug}
}

\date{}

\renewcommand{\headeright}{}
\renewcommand{\undertitle}{}
\renewcommand{\shorttitle}{
Contextual Fraction on Permutation Gain Graphs
}

\begin{document}

\maketitle

\begin{abstract}
For an explicitly represented finite empirical model, deciding whether the contextual fraction is strictly below one is NP-complete, while the standard exact linear program has one column for every global assignment. We identify a permutation-transport class in which this global problem collapses to a fixed-point calculation. Let a connected permutation gain graph act on a finite state set $O$, let $H\leq\Sym(O)$ be its holonomy subgroup, let $F=\Fix(H)$, and let $p$ be an $H$-invariant root distribution. For the induced empirical model,
\[
\NCF(e)=p(F),\qquad \CF(e)=1-p(F).
\]
Consequently, compatibility, $F$, and $\CF(e)$ are computable in $O(|O|(|V|+|E|))$ arithmetic and table operations. For every finite simple $2$-edge-connected graph, any deterministic exact algorithm in the explicit permutation-table query model requires at least $(|O|-1)|E|$ probes in the worst case, making the dependence on the input tables optimal up to constant factors. With a fixed spanning tree, chord insertions and deletions require $O(|O|)$ worst-case time, or time proportional to the moved-set representation, while compatibility and contextual-fraction queries take $O(1)$ time. Finally, for common-marginal realizable binary constraint languages, the support threshold $\CF<1$ is polynomial-time equivalent to the associated finite-domain constraint-satisfaction problem and therefore inherits the Bulatov--Zhuk dichotomy. The results identify a query-optimal and dynamically maintainable tractability island inside the general contextual-fraction problem.
\end{abstract}

\keywords{
contextual fraction
\and permutation gain graphs
\and holonomy
\and exact algorithms
\and query complexity
\and dynamic data structures
\and constraint satisfaction
}

\section{Introduction}
\label{sec:introduction}

A compatible family of local probability distributions need not be the family
of marginals of one joint distribution.  This is the marginal-extension
problem \cite{Vorobev1962}; in the sheaf-theoretic framework, failure of a
global extension is contextuality \cite{AbramskyBrandenburger2011}.  The
contextual fraction is the complement of the largest noncontextual
subprobability that can be embedded in the model
\cite{AbramskyBarbosaMansfield2017}.  Its standard finite linear program has
one variable for every global assignment, so direct computation is exponential
in the number of vertices.

The exponential number of LP columns is not merely a representational
inconvenience. At the support level,
\[
\CF(e)<1
\quad\Longleftrightarrow\quad
\NCF(e)>0
\quad\Longleftrightarrow\quad
\exists g\in O_X\ \text{such that}\
e_C(g|_C)>0\ \text{for every }C\in\mathcal M.
\]
Thus deciding whether an empirical model is not maximally contextual is a
global constraint-satisfaction problem. As shown formally in
\cref{prop:general-complexity}, this decision problem is NP-complete even for
binary contexts with three outcomes, while deciding $\CF(e)=1$ is
coNP-complete. The contribution below is therefore not only a compact
reformulation of a large LP: it identifies a class on which a generally
intractable support problem, and the exact quantitative value beyond it,
become efficiently computable.

The deterministic structure used here is classical permutation gain-graph
theory.  Oriented edges carry permutations, spanning-tree normalization
produces fundamental holonomies, and satisfying states correspond to common
fixed points of the holonomy action
\cite{GrossTucker1987,Zaslavsky1989,Zaslavsky2009,ZaslavskyGlossary}.
Holonomy has also appeared in bundle formulations of contextuality
\cite{Montanhano2021}.  Those constructions are inherited.

\subsection{Computational context and relation to prior work}

The connection between contextuality, global sections, relational hidden
variables, and constraint satisfaction predates the present work
\cite{Abramsky2013,AbramskyGottlobKolaitis2013,Simmons2018}.
Abramsky, Gottlob, and Kolaitis studied the complexity of robust
constraint-satisfaction formulations arising from local hidden-variable
models, while Simmons established NP-completeness results for detecting
possibilistic locality in restricted measurement scenarios. These works
concern support extendability and related hidden-variable decision problems.
They do not provide the probability-weighted fixed-set identity proved here,
a matching query lower bound for exact contextual-fraction evaluation, or a
dynamic data structure for maintaining the exact value under gain updates.

The contextual fraction itself was introduced together with its primal and
dual linear programs and its resource-theoretic interpretation
\cite{AbramskyBarbosaMansfield2017}. The present paper does not modify that
definition. It identifies a structured empirical-model class for which the
generic LP optimum has a closed form and then studies the exact static,
query, dynamic, and support complexity of computing that form.

The paper studies the probability-weighted empirical model obtained by
sampling one root state and transporting it along the gain graph.  For an
$H$-invariant root distribution $p$, the central reduction is
\begin{equation}
\label{eq:intro-main}
\NCF(e)=p(\Fix(H)),\qquad
\CF(e)=1-p(\Fix(H)).
\end{equation}
This converts the contextual-fraction LP into a fixed-set-mass computation.
For example, with eight states on a one-hundred-vertex cycle, the general LP
has
\[
8^{100}\approx2.04\times10^{90}
\]
global-assignment columns, whereas the structural representation has scale
\[
|O|(|V|+|E|)=1600.
\]
The comparison concerns exact representation and computation, not statistical
sampling in a high-dimensional Euclidean space.

\subsection{Contributions and originality}

The paper delivers four results.

\begin{enumerate}
\item It proves \eqref{eq:intro-main} for the probability-weighted empirical
model defined in \cref{sec:model}.  Classical gain-graph theory identifies the
fixed set; the paper-specific step is proving that its probability mass is
exactly the LP-optimal noncontextual weight.

\item It gives an $O(k(n+m))$ exact algorithm and a matching deterministic
query lower bound on $2$-edge-connected graphs.  Thus the algorithm is not
merely efficient relative to the exponential LP; it is worst-case optimal in
the explicit-table model.

\item It gives a semi-dynamic data structure under fixed-tree chord updates.
Dense insertions and deletions take $O(k)$ time, sparse holonomies take
$O(|\Mov(h)|)$ time, and compatibility and contextual-fraction queries take
$O(1)$ time.  Fully dynamic tree-edge updates are isolated as an open problem
rather than assigned an unsupported bound.

\item It gives a compatibility-preserving probabilistic encoding of every
instance over a common-marginal realizable binary language and proves that the
resulting support-threshold problem is polynomial-time equivalent to
$\CSP(\Gamma\cup\{=\})$.  Consequently, the Bulatov--Zhuk dichotomy transfers
exactly to this promised empirical-model class.  The claim is the
compatibility-preserving encoding and the resulting restricted transfer, not
the pre-existing general connection between contextuality and constraint
satisfaction.
\end{enumerate}

These results form one computational statement: permutation transport is a
natural structural restriction that collapses an exponential global search to
an optimal linear-time algorithm, remains maintainable under local cycle
updates, and sits on a sharply characterized support-complexity boundary.

\section{Contextual Fraction as a Finite Linear Program}
\label{sec:background}

\begin{definition}[Measurement scenario and empirical model]
A \emph{measurement scenario} is a triple $(X,\mathcal M,O)$, where $X$ is a
finite set of observables, $\mathcal M$ is a cover of $X$ by contexts, and
$O=\{O_x\}_{x\in X}$ assigns a finite outcome set to each observable.  For
$U\subseteq X$, write $O_U=\prod_{x\in U}O_x$.  An \emph{empirical model} is
a family $e=\{e_C\in\D(O_C)\}_{C\in\mathcal M}$ satisfying
\[
e_C|_{C\cap C'}=e_{C'}|_{C\cap C'}
\qquad(C,C'\in\mathcal M).
\]
It is \emph{noncontextual} if there exists $d\in\D(O_X)$ with
$d|_C=e_C$ for every $C$.
\end{definition}

\begin{definition}[Contextual fraction]
The noncontextual and contextual fractions are
\begin{align*}
\NCF(e)
&:=\max\left\{\lambda\in[0,1]:
e=\lambda e^{\mathrm{NC}}+(1-\lambda)e',\quad
e^{\mathrm{NC}}\text{ noncontextual}
\right\},\\
\CF(e)&:=1-\NCF(e).
\end{align*}
\end{definition}

Index rows by local events $(C,s)$ and columns by global assignments
$g\in O_X$.  Define
\[
M_{(C,s),g}:=\1[g|_C=s],
\qquad
(v_e)_{(C,s)}:=e_C(s).
\]

\begin{proposition}[Primal and dual forms]
\label{prop:lp}
For every finite empirical model,
\begin{align}
\NCF(e)
&=\max\{\1^{\mathsf T}b:Mb\leq v_e,\ b\geq0\},
\label{eq:primal}\\
&=\min\{v_e^{\mathsf T}y:M^{\mathsf T}y\geq\1,\ y\geq0\}.
\label{eq:dual}
\end{align}
\end{proposition}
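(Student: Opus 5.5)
The plan is to prove the primal identity \eqref{eq:primal} directly from the definition of $\NCF$, and then obtain \eqref{eq:dual} from linear programming duality.

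\emph{Step 1: a feasible $b$ gives a decomposition.} Let $b\ge0$ satisfy $Mb\le v_e$, and put $\lambda=\1^{\mathsf T}b$. Fix any context $C$ and sum the rows $(C,s)$ over $s\in O_C$. Each global assignment $g$ has exactly one restriction $g|_C$, so this sum gives $\lambda\le\sum_s e_C(s)=1$. If $\lambda=0$, the decomposition with $\lambda=0$ and $e'=e$ is admissible. If $\lambda>0$, set $d=b/\lambda\in\D(O_X)$ and $e^{\mathrm{NC}}_C=d|_C$. This family is noncontextual and automatically compatible. The residual $e_C-\lambda d|_C=v_e-Mb$ is entrywise nonnegative and has total mass $1-\lambda$ in every context. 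If $\lambda<1$, normalize it to get $e'$. Compatibility of $e'$ follows because both $e$ and $e^{\mathrm{NC}}$ are compatible and marginalization is linear. If $\lambda=1$, then $e=e^{\mathrm{NC}}$. Hence the LP value is at most $\NCF(e)$.

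\emph{Step 2: a decomposition gives a feasible $b$.} Conversely, suppose $e=\lambda e^{\mathrm{NC}}+(1-\lambda)e'$, and let $d$ be a global distribution witnessing that $e^{\mathrm{NC}}$ is noncontextual. Then $b=\lambda d$ satisfies $b\ge0$ and $Mb=\lambda\,(e^{\mathrm{NC}}_C(s))_{(C,s)}\le v_e$, since $(1-\lambda)e'\ge0$. Its value is $\1^{\mathsf T}b=\lambda$.

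\emph{Step 3: the maximum is attained.} The feasible region is a nonempty polytope: $b=0$ is feasible, and by Step~1 the region is bounded. So the maximum exists, and in particular the $\max$ in the definition of $\NCF$ is attained.

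\emph{Step 4: the dual.} Equation \eqref{eq:dual} is the standard symmetric dual of $\max\{c^{\mathsf T}b: Mb\le v_e,\ b\ge0\}$ with $c=\1$, namely $\min\{v_e^{\mathsf T}y: M^{\mathsf T}y\ge\1,\ y\ge0\}$. Strong duality applies because the primal is feasible with a finite optimum. The dual is feasible as well: take $y$ to be the indicator of the rows of one fixed context. Then every column $g$ has $(M^{\mathsf T}y)_g=1$.

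The only step that needs care is the normalization in Step~1. There one has to check that the leftover family is a genuine compatible empirical model, and treat the edge cases $\lambda\in\{0,1\}$ separately. Everything else is bookkeeping.
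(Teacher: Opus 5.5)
Your proof is correct and follows the paper's argument essentially step for step. A decomposition $e=\lambda e^{\mathrm{NC}}+(1-\lambda)e'$ gives the feasible point $b=\lambda q$; a feasible $b$ is normalized to a global distribution, with the compatible residual $(v_e-Mb)/(1-\lambda)$ serving as $e'$; and \eqref{eq:dual} follows from LP duality. Your additions (explicit treatment of $\lambda\in\{0,1\}$, boundedness of the feasible polytope to show the $\max$ is attained, and the one-context indicator as a dual feasible point) only spell out what the paper calls immediate or standard.
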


\begin{proof}
If $e=\lambda e^{\mathrm{NC}}+(1-\lambda)e'$ and $q$ is a global
distribution for $e^{\mathrm{NC}}$, then $b=\lambda q$ is feasible in
\eqref{eq:primal} with objective $\lambda$.  Conversely, let $b$ be feasible
and put $\lambda=\1^{\mathsf T}b$.  The entries of $Mb$ belonging to any
fixed context sum to $\lambda$, so $0\leq\lambda\leq1$.  For $0<\lambda<1$, the vector $q=b/\lambda$ is a global distribution.  The
vector $Mb$ is the family of local marginals of the global subdistribution
$b$, and is therefore compatible.  Hence
\[
v_{e'}:=\frac{v_e-Mb}{1-\lambda}
\]
is nonnegative, normalized in every context, and compatible.  Hence
$e=\lambda e^{\mathrm{NC}}+(1-\lambda)e'$.  The endpoint cases are
immediate.  This proves \eqref{eq:primal}; \eqref{eq:dual} follows from
finite-dimensional LP duality.
\end{proof}

\begin{proposition}[General support-threshold complexity]
\label{prop:general-complexity}
Consider the following decision problem: given an explicitly represented finite
compatible empirical model with rational entries, decide whether
$\CF(e)<1$. This problem is NP-complete, even when every context has size two
and every observable has the common outcome set $\{1,2,3\}$. Consequently,
deciding whether $\CF(e)=1$ is coNP-complete, and exact evaluation of
$\CF(e)$ is NP-hard.
\end{proposition}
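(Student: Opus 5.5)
The plan is to prove membership in NP and NP-hardness separately, with hardness coming from graph $3$-colouring. The support-level characterisation from the introduction reduces $\CF(e)<1$ to a question about positive local probabilities. By \cref{prop:lp}, $\NCF(e)>0$ holds if and only if the primal \eqref{eq:primal} has a feasible $b\neq0$. If $b_g>0$ for some $g$, then $M_{(C,g|_C),g}=1$ forces $e_C(g|_C)\geq b_g>0$ for every context $C$. Conversely, if some $g$ has $e_C(g|_C)>0$ for all $C$, then $b=\varepsilon\,\1_g$ is feasible for $\varepsilon=\min_C e_C(g|_C)>0$. Hence
\[
\CF(e)<1\iff\exists g\in O_X\ \forall C\in\mathcal M:\ e_C(g|_C)>0 .
\]
A certificate is such a $g$, of size $|X|$. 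Checking it takes one table lookup per context, which is polynomial in the explicit representation, so the problem is in NP.

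For hardness I reduce from graph $3$-colouring, which is NP-complete. Given a graph $G=(V,E)$, assume without loss of generality that it has no isolated vertices; isolated vertices can be coloured arbitrarily. Take $X=V$, $\mathcal M=E$ (each context is an edge $\{u,v\}$), and $O_x=\{1,2,3\}$. On each edge put the uniform distribution on the six ordered pairs of distinct colours:
\[
e_{\{u,v\}}(a,b)=\tfrac16\,\1[a\neq b].
\]
Each single-vertex marginal of this table is uniform on $\{1,2,3\}$. Any two contexts intersect in at most one vertex, so every overlap marginal is uniform and the model is compatible; the entries are rational. The support of $e_C$ is exactly the set of proper colourings of the edge $C$. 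By the characterisation above, $\CF(e)<1$ holds if and only if $G$ is properly $3$-colourable. The construction is polynomial and meets the stated restrictions: every context has size two and every observable has outcome set $\{1,2,3\}$.

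The remaining claims follow directly. The complement of an NP-complete problem under the same promise of a valid input is coNP-complete, which gives the statement for $\CF(e)=1$. Any algorithm that evaluates $\CF(e)$ exactly decides $\CF(e)<1$ by a single comparison, so exact evaluation is NP-hard.

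The only step that needs care is compatibility of the reduction's model. That is why I use the symmetric uniform-on-distinct-pairs table: a non-uniform weighting could produce mismatched marginals at shared vertices. A second, minor point is the endpoint case $\lambda\in\{0,1\}$ in \cref{prop:lp}, but only feasibility of a nonzero $b$ is used, so it causes no difficulty.
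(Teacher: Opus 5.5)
Your proposal is correct and follows the paper's proof essentially step for step. NP membership comes from a global-assignment certificate read off the primal \eqref{eq:primal}, NP-hardness from the same $\tfrac16\,\1[a\neq b]$ reduction from graph $3$-colourability, and the remaining claims from complementation and thresholding. Two details are small rigor improvements that the paper leaves implicit: your explicit derivation of the support characterization via $b=\varepsilon\,\1_g$, and your removal of isolated vertices so that the edge contexts genuinely cover $X$.
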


\begin{proof}
By the primal formulation, $\NCF(e)>0$ if and only if some global assignment
$g$ selects a positive-probability event in every context. Such a global
assignment is a polynomial-size certificate, so the problem belongs to NP.

For NP-hardness, reduce graph $3$-colourability. Given a graph
$G=(V,E)$, take one observable for each vertex, outcome set
$O=\{1,2,3\}$, and one context $\{u,v\}$ for each edge. Define
\[
e_{uv}(a,b):=\frac{1}{6}\,\1[a\neq b].
\]
Every one-vertex marginal is uniform, so the resulting empirical model is
compatible. A global assignment selects a positive event on every edge if
and only if it is a proper $3$-colouring of $G$. Therefore
\[
G\text{ is $3$-colourable}
\quad\Longleftrightarrow\quad
\NCF(e)>0
\quad\Longleftrightarrow\quad
\CF(e)<1.
\]
The remaining claims follow by complementation and by observing that exact
evaluation decides this threshold.
\end{proof}

\section{Probability-Weighted Permutation Gain Graphs}
\label{sec:model}

Let $G=(V,E)$ be a finite connected simple undirected graph with
$n=|V|\geq2$ and $m=|E|$, let $O$ be a finite state set, and let
$k=|O|$.  Each oriented edge $u\to v$ carries a permutation
$\sigma_{uv}\in\Sym(O)$ with $\sigma_{vu}=\sigma_{uv}^{-1}$.  A state
$g:V\to O$ satisfies the edge when
\[
g(v)=\sigma_{uv}(g(u)).
\]
This is the standard permutation-action form of a gain-graph state
\cite{Zaslavsky2009}.

Choose a root $r$ and a spanning tree $T$.  Let $\tau_v$ be the product of
edge gains along the unique tree path from $r$ to $v$, with $\tau_r=\id$.
For each oriented edge define
\begin{equation}
\label{eq:holonomy}
h_{uv}:=\tau_v^{-1}\sigma_{uv}\tau_u.
\end{equation}
Tree edges have identity holonomy.  The non-tree holonomies generate a subgroup
$H\leq\Sym(O)$, and
\[
F:=\Fix(H)=\{a\in O:h(a)=a\text{ for every }h\in H\}.
\]
The spanning-tree construction and the fixed-point description of satisfying
states are classical; they are restated only to fix notation.

Let $p\in\D(O)$ and define
\[
p_v:=\pushf{\tau_v}{p}.
\]
On an oriented edge $u\to v$, define
\begin{equation}
\label{eq:edge-model}
e_{uv}(x,y)
:=p_u(x)\,\1[y=\sigma_{uv}(x)].
\end{equation}

\begin{proposition}[Compatibility]
\label{prop:compatibility}
The family \eqref{eq:edge-model} is a compatible empirical model on the edge
cover of $G$ if and only if $p$ is invariant under $H$.
\end{proposition}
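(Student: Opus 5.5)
The plan is to reduce compatibility to a statement about one-vertex marginals, and then turn the non-tree edge conditions into invariance of $p$ under the holonomies.

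\textbf{Step 1: reduction to vertex marginals.} In the edge cover, two distinct contexts $\{u,v\}$ and $\{u',v'\}$ meet in at most one vertex, since $G$ is simple. So compatibility means exactly this: for every vertex $w$, all edge contexts containing $w$ give the same marginal on $O_w$. We also need $e_{uv}$ to be well defined as a distribution on the unordered context. In other words, the model read along $u\to v$ and along $v\to u$ must agree.

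\textbf{Step 2: the two marginals of an edge.} For $u\to v$, the $u$-marginal of \eqref{eq:edge-model} is $p_u$. The $v$-marginal is $\pushf{\sigma_{uv}}{p_u}$, because the kernel is deterministic: $y=\sigma_{uv}(x)$. Reading the edge as $v\to u$ uses $\sigma_{vu}=\sigma_{uv}^{-1}$ and $p_v$. Both readings describe a distribution supported on the graph of $\sigma_{uv}$, and they coincide exactly when $p_v=\pushf{\sigma_{uv}}{p_u}$. Under that condition the $u$-marginal of every edge at $u$ is $p_u$. Hence the model is compatible if and only if
\[
\pushf{\sigma_{uv}}{p_u}=p_v\qquad\text{for every edge } uv .
\]
The direction of this equivalence that needs care is the following. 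Suppose the context distributions agree on overlaps, but the edge is read with the asymmetric definition. Then the $v$-marginal of edge $uv$, namely $\pushf{\sigma_{uv}}{p_u}$, must equal the $v$-marginal $p_v$ coming from a tree edge at $v$. If $v$ is the root, use the tree edge into $r$, whose marginal is $p_r=p$. I will need to phrase this consistently whichever convention is used.

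\textbf{Step 3: rewriting the edge condition as holonomy invariance.} Substitute $p_u=\pushf{\tau_u}{p}$ and $p_v=\pushf{\tau_v}{p}$. The condition becomes $\pushf{\sigma_{uv}\tau_u}{p}=\pushf{\tau_v}{p}$, which is equivalent to $\pushf{h_{uv}}{p}=p$ by \eqref{eq:holonomy}. On tree edges $h_{uv}=\id$, so the condition holds automatically. The model is therefore compatible if and only if $p$ is invariant under every non-tree holonomy. The set of permutations fixing $p$ under pushforward is a subgroup of $\Sym(O)$. So invariance under the generators is equivalent to invariance under the group they generate, which is $H$.

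\textbf{Main obstacle.} The only delicate point is Step 2. Each edge marginal must be matched against a reference, either $p_v$ via a tree edge or the reverse orientation of the same edge, and this must be done without circularity. Connectivity guarantees that every vertex $v$ lies on a tree edge. On that edge $p_v=\pushf{\sigma}{p_{\text{parent}}}$ holds by the definition of $\tau_v$, so the tree-edge marginals at $v$ are all $p_v$. The non-tree condition must then match this value.
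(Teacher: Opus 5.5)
Your proof is correct and follows the paper's argument. You compute the endpoint marginals $p_u$ and $\pushf{\sigma_{uv}}{p_u}=\pushf{\tau_v}{\pushf{h_{uv}}{p}}$, then use singleton overlaps and injectivity of pushforward by $\tau_v$ to reduce compatibility to $\pushf{h_{uv}}{p}=p$ on every edge, and finally pass from the generators to $H$. Your extra care is welcome but not a different route: anchoring each vertex's marginal to $p_v$ through a tree edge, and checking that both orientations of an edge agree under exactly the same condition, makes explicit what the paper's short proof leaves implicit.
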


\begin{proof}
The marginal of $e_{uv}$ at $u$ is $p_u$.  Its marginal at $v$ is
\[
\pushf{\sigma_{uv}}{p_u}
=\pushf{\sigma_{uv}\tau_u}{p}
=\pushf{\tau_v}{\pushf{h_{uv}}{p}},
\]
using \eqref{eq:holonomy}.  Since pushforward by $\tau_v$ is injective, this
equals $p_v=\pushf{\tau_v}{p}$ exactly when
$\pushf{h_{uv}}{p}=p$.  The edge cover has singleton overlaps, so compatibility
is equivalent to this condition for every generator, hence for $H$.
\end{proof}

\begin{lemma}[Global states]
\label{lem:global-states}
The edge-satisfying states are in bijection with $F$ through
\[
a\longmapsto g_a,
\qquad
g_a(v)=\tau_v(a).
\]
Moreover, $g_a$ is support-compatible with \eqref{eq:edge-model} if and only
if $a\in F\cap\supp(p)$.
\end{lemma}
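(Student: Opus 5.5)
The plan is to prove the two claims separately. First we treat edge satisfaction, using tree propagation and the holonomy identity \eqref{eq:holonomy}. Then we read off support-compatibility directly from \eqref{eq:edge-model}.

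For the bijection, take an edge-satisfying state $g$ and set $a:=g(r)$. Induction along the unique tree path from $r$ gives $g(v)=\tau_v(a)$ for every $v$, because each tree edge forces $g(v)=\sigma_{uv}(g(u))$ and $\tau_v=\sigma_{uv}\tau_u$ whenever $u$ is the tree parent of $v$. Hence $g=g_a$, and the map $a\mapsto g_a$ is injective since $g_a(r)=a$. It remains to decide which $g_a$ satisfy every edge. The condition on the oriented edge $u\to v$ is $\tau_v(a)=\sigma_{uv}\tau_u(a)$, which is equivalent to $h_{uv}(a)=\tau_v^{-1}\sigma_{uv}\tau_u(a)=a$. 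Tree edges have identity holonomy, so $g_a$ is satisfying exactly when $a$ is fixed by every non-tree holonomy. Since those holonomies generate $H$, this is the condition $a\in\Fix(H)=F$. We should also note that the reverse orientation gives the same condition, because $h_{vu}=h_{uv}^{-1}$, which follows from $\sigma_{vu}=\sigma_{uv}^{-1}$.

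For the support statement, we read ``support-compatible'' as $e_{uv}(g|_{\{u,v\}})>0$ for every edge. By \eqref{eq:edge-model}, $e_{uv}(g(u),g(v))>0$ holds iff $g(v)=\sigma_{uv}(g(u))$ and $p_u(g(u))>0$. The first clause, taken over all edges, says that $g$ is edge-satisfying, so $g=g_a$ with $a\in F$. For the second clause, $p_u(\tau_u(a))=p(\tau_u^{-1}\tau_u(a))=p(a)$, so the positivity condition is $a\in\supp(p)$ at every vertex simultaneously. One point needs checking here: a support-compatible $g$ must be of the form $g_a$. This holds because positivity on an edge already forces the edge to be satisfied, so the first part applies.

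The main obstacle is the bookkeeping rather than any deep step. We must fix the orientation convention for $\tau_v$ (the product of gains composed in path order, so that $\tau_v=\sigma_{uv}\tau_u$ for the tree parent $u$) and confirm that the pushforward evaluates as $p_u(x)=p(\tau_u^{-1}x)$. Once these conventions are pinned down, both parts are immediate. Note that $H$-invariance of $p$ is not needed for this lemma.
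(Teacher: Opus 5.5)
Your proposal is correct and follows the paper's proof essentially step for step. You determine a satisfying state by its root value, propagate along $T$ to get $g=g_a$, reduce each edge relation to $h_{uv}(a)=a$, and observe that every event selected by $g_a$ has probability $p_u(\tau_u(a))=p(a)$. Your extra remarks (positivity on an edge already forces the edge relation, and $H$-invariance of $p$ is not used) only make explicit what the paper leaves implicit.
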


\begin{proof}
A satisfying state is determined by its root value $a=g(r)$, and propagation
along $T$ forces $g(v)=\tau_v(a)$.  The relation on an arbitrary edge is then
\[
\tau_v(a)=\sigma_{uv}\tau_u(a),
\]
equivalent to $h_{uv}(a)=a$.  Thus all edge relations hold exactly for
$a\in F$.  For such $a$, the selected event on every edge has probability
$p(a)$, proving the support statement.
\end{proof}

\begin{theorem}[Probability-weighted contextual-fraction identity]
\label{thm:main}
Assume the compatibility condition in \cref{prop:compatibility}.  Then
\begin{equation}
\label{eq:main}
\NCF(e)=p(F),\qquad \CF(e)=1-p(F).
\end{equation}
\end{theorem}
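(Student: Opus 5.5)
We prove the two inequalities $\NCF(e)\geq p(F)$ and $\NCF(e)\leq p(F)$ separately, using the primal and dual forms of \cref{prop:lp}. The second identity $\CF(e)=1-p(F)$ then follows from the definition.

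\emph{Lower bound.} Take the primal vector supported on the satisfying states of \cref{lem:global-states}: put $b_{g_a}:=p(a)$ for $a\in F$ and $b_g:=0$ for every other global assignment $g$. Its objective value is $\sum_{a\in F}p(a)=p(F)$, so it remains to check feasibility.

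Fix an oriented edge $u\to v$ and an event $(x,y)$. The entry $(Mb)_{(uv,(x,y))}$ is the sum of $p(a)$ over those $a\in F$ with $\tau_u(a)=x$ and $\tau_v(a)=y$. Since $\tau_u$ is a bijection, at most one $a$ qualifies, namely $a=\tau_u^{-1}(x)$. That $a$ contributes only if $a\in F$ and $y=\tau_v(a)=\sigma_{uv}(x)$, where the last equality holds because $a\in F$. In that case the entry equals $p(a)$. On the other side, $e_{uv}(x,y)=p_u(x)=p(\tau_u^{-1}x)=p(a)$. Hence $Mb\leq v_e$ holds, with equality on the relevant events, and the lower bound follows.

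\emph{Upper bound.} Here a dual certificate is cleanest. Fix one tree edge or one chord and weight the events that fall outside the fixed set. The most robust choice is to let $y$ live on the chords. Enumerate the non-tree edges $c_1,\dots,c_\ell$, and for each $a\notin F$ pick a chord $c$ whose holonomy moves $a$. Chord holonomies generate $H$, so if every generator fixed $a$ then $a\in F$; hence such a chord exists.

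\emph{Alternative upper bound (primal).} It is simpler to argue directly on the primal. Let $b\geq0$ satisfy $Mb\leq v_e$. If $b_g>0$, then $g$ selects a positive-probability event on every edge, so $g$ is edge-satisfying, because $e_{uv}$ is supported on the graph of $\sigma_{uv}$. By \cref{lem:global-states}, $g=g_a$ for some $a\in F$. Now apply the constraint at the root: choose any edge $r\to v$ and the event $(a,\sigma_{rv}(a))$. Only the assignment $g_a$ realizes this event among the satisfying states, so $b_{g_a}\leq e_{rv}(a,\sigma_{rv}(a))=p_r(a)=p(a)$. Summing over $a\in F$ gives $\1^{\mathsf T}b\leq p(F)$.

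The only delicate step is the claim that the support of any feasible $b$ is confined to satisfying states. This uses the fact that every edge constraint forces $b_g=0$ whenever $g|_{\{u,v\}}$ lies off the support of $e_{uv}$. Connectedness with $n\geq2$ guarantees that the root has an incident edge. $H$-invariance of $p$ is used only to make $e$ a genuine empirical model, so that the LP of \cref{prop:lp} applies.
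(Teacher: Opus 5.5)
Your proof is correct and is essentially the paper's proof: the same primal witness $b_{g_a}=p(a)$ for $a\in F$ gives $\NCF(e)\geq p(F)$, and the upper bound likewise confines $\supp(b)$ to the states $g_a$ with $a\in F$ and then bounds each $b_{g_a}\leq p(a)$ by a single row constraint (you use an edge at the root, the paper uses any edge). The half-written dual-certificate paragraph is never completed or used, so you should delete it.
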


\begin{proof}
Let $b$ be feasible in \eqref{eq:primal}.  A global assignment that violates
an edge relation selects a zero-probability row and therefore has weight zero.
By \cref{lem:global-states}, the only remaining columns are $g_a$ with
$a\in F\cap\supp(p)$.  Fix such an $a$.  On any edge, $g_a$ selects an event
of probability $p(a)$, and no $g_{a'}$ with $a'\neq a$ selects the same event
because $\tau_u$ is injective.  Hence $b_{g_a}\leq p(a)$, and
\[
\1^{\mathsf T}b\leq\sum_{a\in F}p(a)=p(F).
\]

For the reverse inequality, set $b_{g_a}=p(a)$ for $a\in F$ and all other components to zero.  Distinct root values select distinct events at each edge, so every row load is either its full empirical capacity or zero.  Thus $b$ is feasible and has objective $p(F)$. Moreover the exponentially many primal variables are not merely unnecessary computationally: the displayed optimum is supported on at most $|F|\leq k$ global assignments.

\end{proof}

\begin{corollary}[Increment under an added chord]
\label{cor:increment}
Let a compatible model have holonomy subgroup $H_0$ and fixed set
$F_0=\Fix(H_0)$.  Add a chord whose fundamental holonomy is $h$.  If $p$ is
invariant under both the old and new holonomy groups, then
\[
F=F_0\cap\Fix(h)
\]
and
\begin{equation}
\label{eq:increment}
\CF(e)-\CF(e_0)
=p\bigl(F_0\setminus\Fix(h)\bigr).
\end{equation}
\end{corollary}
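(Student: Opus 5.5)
The plan is to work with a single fixed spanning tree $T$ rooted at $r$ and to keep it when the chord is added. Then every $\tau_v$ is unchanged, so the old non-tree holonomies keep their values and the new edge adds exactly one fundamental holonomy $h$. Consequently the new holonomy group is $H=\langle H_0,h\rangle$. I would first record this as a remark: adding a non-tree edge $\{u,v\}$ does not alter the tree, and by \eqref{eq:holonomy} its holonomy is $h=\tau_v^{-1}\sigma_{uv}\tau_u$, computed with the old transports.

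The fixed-set identity then follows from the fact that a point is fixed by a group exactly when it is fixed by a generating set. Here $F=\Fix(\langle H_0,h\rangle)$ consists of the points fixed by every element of $H_0$ and also by $h$. If $a$ is fixed by all generators, it is fixed by all their products and inverses. Hence $F=F_0\cap\Fix(h)$.

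For the increment, $p$ is invariant under both groups, so \cref{prop:compatibility} makes both families compatible empirical models. Applying \cref{thm:main} to each gives $\CF(e_0)=1-p(F_0)$ and $\CF(e)=1-p(F)$. Subtracting,
\[
\CF(e)-\CF(e_0)=p(F_0)-p\bigl(F_0\cap\Fix(h)\bigr)=p\bigl(F_0\setminus\Fix(h)\bigr),
\]
where the last step uses additivity of $p$ over the disjoint decomposition of $F_0$.

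The only point needing care is that the new graph must remain simple, and that the fixed set must not depend on the choice of spanning tree. Fixing the tree resolves the second point. The first point is implicit in calling the new edge a chord, that is, an edge not already present. Otherwise the argument is a direct combination of \cref{prop:compatibility} and \cref{thm:main}.
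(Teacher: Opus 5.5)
Your proposal is correct and follows the paper's proof: the new holonomy group is $\langle H_0,h\rangle$, its fixed set is $F_0\cap\Fix(h)$ because a point fixed by a generating set is fixed by the whole group, and applying \cref{thm:main} to both models and subtracting gives \eqref{eq:increment}. Your extra remarks only make explicit what the paper's two-line proof leaves implicit: you keep the spanning tree fixed so the old transports and holonomies are unchanged, and you invoke \cref{prop:compatibility} for compatibility of both models.
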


\begin{proof}
The new group is $\langle H_0,h\rangle$, whose fixed set is
$F_0\cap\Fix(h)$.  Apply \cref{thm:main} to both models and subtract.
\end{proof}

\begin{remark}[Classical consequences]
Trees, switching invariance, independence of the spanning-tree choice, and the
usual odd-cycle obstruction follow immediately from the classical holonomy
representation.  They are not developed as separate sections.  In particular,
a tree has trivial holonomy and $\CF=0$, while a uniform root distribution
gives $\CF=1-|F|/|O|$.
\end{remark}

\section{Static Evaluation and Query Optimality}
\label{sec:static}

Fix one stored orientation for every undirected edge. Its gain is represented by an explicit array of length $k$, and a table probe specifies an edge and a domain state and returns the corresponding array entry. Reverse tables and all tree transports may be constructed explicitly in $O(k)$ operations per edge. We work in a unit-cost RAM model for table access, comparisons, and exact arithmetic on entries of $p$. Thus the bounds below count arithmetic and table operations; in a bit-complexity model, the encoding lengths of the entries of $p$ contribute an additional arithmetic cost.

\begin{proposition}[Exact static evaluation]
\label{prop:static-upper}
Compatibility, the common fixed set $F$, and, whenever compatibility holds,
the exact value $\CF(e)$ can be computed in
\[
O\bigl(k(n+m)\bigr)
\]
arithmetic and table operations using $O(kn)$ working space, excluding the
read-only input tables.
\end{proposition}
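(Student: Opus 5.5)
The algorithm reads every edge table once, builds the tree transports by breadth-first search, computes every chord holonomy explicitly, intersects their fixed sets, and then applies \cref{prop:compatibility} and \cref{thm:main}.

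\emph{Step 1 (transports).} Choose a root $r$ and build a spanning tree $T$ by breadth-first search in $O(n+m)$ time. Process the vertices in BFS order. For a tree edge from parent $u$ to child $v$, compute $\tau_v=\sigma_{uv}\tau_u$ as an explicit array: $\tau_v[a]=\sigma_{uv}[\tau_u[a]]$, where $\sigma_{uv}$ is either the stored table or its inverse, built in $O(k)$. Also store $\tau_v^{-1}$ by inverting the array in $O(k)$. This costs $O(kn)$ time and space for the arrays $\tau_v,\tau_v^{-1}$, $v\in V$.

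\emph{Step 2 (fixed set).} Initialize a Boolean array $\mathrm{inF}[a]=$ true for all $a\in O$. For each non-tree edge $u\to v$, in its stored orientation, and each $a\in O$, evaluate
\[
h_{uv}(a)=\tau_v^{-1}\bigl[\sigma_{uv}[\tau_u[a]]\bigr]
\]
in $O(1)$ time. Set $\mathrm{inF}[a]=$ false whenever $h_{uv}(a)\neq a$. The holonomy array $h_{uv}$ itself is kept only temporarily, using $O(k)$ scratch space. Since tree edges have identity holonomy, $F=\bigcap_{uv}\Fix(h_{uv})$ over the chords. The fixed set of a group equals the intersection of the fixed sets of its generators, so this array represents $\Fix(H)$. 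The total cost is $O(km)$.

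\emph{Step 3 (compatibility).} The delicate point is testing $H$-invariance of $p$ without generating $H$. As in the proof of \cref{prop:compatibility}, invariance under $H$ is equivalent to $\pushf{h_{uv}}{p}=p$ for every chord generator. For a permutation, this means $p(h_{uv}(a))=p(a)$ for all $a$, which is $k$ comparisons per chord. These comparisons can be fused into the loop of Step 2, so the cost stays $O(km)$. Because tree edges are automatically consistent, this test is exact.

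\emph{Step 4 (value).} If the test passes, \cref{thm:main} gives $\CF(e)=1-\sum_{a:\mathrm{inF}[a]}p(a)$, computed in $O(k)$ arithmetic operations. Summing the costs gives $O(k(n+m))$ time and $O(kn)$ working space, which excludes the read-only input tables.

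The one possible obstacle is the orientation bookkeeping: the stored orientation may be $v\to u$ rather than $u\to v$. In that case I will use the $O(k)$ reversed table, or equivalently $h_{vu}=h_{uv}^{-1}$, which has the same fixed set and gives the same invariance condition. Either way the per-edge cost remains $O(k)$.
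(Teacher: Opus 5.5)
Your proposal is correct and follows the paper's proof essentially step for step. Both build a rooted spanning tree in $O(n+m)$ time and compute explicit transports and their inverses at $O(k)$ per tree edge. Both then evaluate each chord holonomy pointwise, fusing the test $p(h(a))=p(a)$ into the running fixed-set intersection at $O(k)$ per chord, and finish by summing $p$ over $F$ via \cref{thm:main}. Your extra bookkeeping fills in details the paper leaves implicit: reducing $H$-invariance to the chord generators, and using $h_{vu}=h_{uv}^{-1}$ when an edge is stored in the reverse orientation.
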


\begin{proof}
Construct a rooted spanning tree in $O(n+m)$ time.  Compute all tree transports
and inverses using one length-$k$ permutation composition per tree edge, for
$O(kn)$ time.  For each non-tree edge, compute its fundamental holonomy
pointwise, test $p(h(a))=p(a)$ for every $a$, and intersect its fixed set into
a running set $F$.  This costs $O(k)$ per non-tree edge and $O(km)$ in total.
If compatibility holds, sum $p(a)$ over $a\in F$ and apply
\cref{thm:main}.
\end{proof}

The next theorem gives a genuine matching lower bound.  The restriction to
$2$-edge-connected graphs is necessary: a bridge lies on no cycle, so its
gain cannot affect the holonomy fixed set.

\begin{theorem}[Worst-case query lower bound]
\label{thm:static-lower}
Let $k\geq2$, and let $G=(V,E)$ be a finite simple $2$-edge-connected graph with $n$ vertices and $m$ edges. Consider uniform root distribution on $O$ and arbitrary permutation gains presented as explicit tables.  Every deterministic algorithm that outputs the exact contextual fraction for every such input must make at least
\[
(k-1)m
\]
table probes in the worst case.  Consequently, because a $2$-edge-connected graph satisfies $m\geq n$, the worst-case complexity is
\[
\Omega\bigl(k(n+m)\bigr).
\]
\end{theorem}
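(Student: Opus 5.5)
The plan is an adversary (indistinguishability) argument against a fixed deterministic algorithm $\mathcal A$. The adversary answers every probe as if all gains were the identity. Run $\mathcal A$ on the all-identity instance $\sigma^{0}$, where every $\sigma_{uv}=\id$. Its probe sequence is then completely determined, and it outputs some value. Here every holonomy is trivial, $F=O$, and since the uniform $p$ is invariant under every subgroup, \cref{prop:compatibility} holds and \cref{thm:main} gives $\CF(e)=0$. So $\mathcal A$ must output $0$. It suffices to show that if $\mathcal A$ probes at most $k-2$ entries of the table of some edge $e_0=\{u,v\}$, then there is a second valid instance, consistent with every answer given, whose contextual fraction is nonzero.

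To build that instance, suppose at most $k-2$ entries of the stored table of $e_0$ (say, orientation $u\to v$) were probed. Then there are two distinct states $a\neq b$ whose entries were never read. Define $\sigma^{1}$ by setting $\sigma^{1}_{uv}=(a\,b)$, the transposition, and keeping every other edge's gain equal to $\id$. Every probed entry of $e_0$ lies outside $\{a,b\}$ and still returns the identity value. All other tables are unchanged. Hence $\mathcal A$ follows the same computation on $\sigma^{1}$ and outputs $0$ again.

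Next I compute $\CF$ for $\sigma^{1}$. Because $G$ is $2$-edge-connected, $e_0$ is not a bridge, so $G-e_0$ is connected and has a spanning tree $T$ avoiding $e_0$. With this $T$, all tree gains are identities, so all $\tau_w=\id$. By \eqref{eq:holonomy}, the non-tree holonomies are $h_{uv}=(a\,b)$ and $\id$ on every other chord. Thus $H=\langle(a\,b)\rangle$ and $F=O\setminus\{a,b\}$. The uniform $p$ is $H$-invariant, so \cref{thm:main} applies and gives $\CF(e)=2/k\neq0$. This contradicts exactness of $\mathcal A$.

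Therefore $\mathcal A$ probes at least $k-1$ entries of every one of the $m$ tables on the identity instance, for a total of at least $(k-1)m$. For the asymptotic form, a $2$-edge-connected simple graph on $n\geq 3$ vertices has minimum degree at least $2$, so $m\geq n$. Since $k\geq2$ gives $k-1\geq k/2$, we get $(k-1)m\geq km/2\geq k(n+m)/4$, which is $\Omega(k(n+m))$.

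The step needing the most care is the claim that the perturbed gain changes $F$. The holonomy subgroup depends on the chosen tree only up to conjugation, so I use the freedom to pick $T$ avoiding $e_0$. The only place $2$-edge-connectivity is used is to guarantee that such a tree exists. Consistency of the adversary is automatic because its answers never depend on the unread entries.
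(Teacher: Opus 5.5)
Your proof is correct and follows the paper's own argument. You run the algorithm on the all-identity instance, find an edge with at most $k-2$ probed entries, put the transposition $(a\ b)$ on two unread states, and invoke \cref{thm:main} to separate $\CF=2/k$ from $\CF=0$. The differences are only cosmetic: you read off $H$ by choosing a spanning tree that avoids the non-bridge $e_0$, which is legitimate because uniform $p$ makes the model independent of the tree, whereas the paper argues via a cycle through $e$; and you spell out the $m\geq n$ and $(k-1)m\geq k(n+m)/4$ arithmetic explicitly.
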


\begin{proof}
Run the algorithm on the input in which every edge gain is the identity. This input is compatible, $H$ is trivial, and $\CF=0$.

Suppose the algorithm makes fewer than $(k-1)m$ probes.  Then some edge $e$ has at most $k-2$ queried domain points.  Choose two unqueried states $a,b\in O$.  Form a second input by leaving every other edge gain equal to the identity and replacing the gain on $e$ by the transposition $(a\ b)$. The two inputs agree on every table entry probed by the algorithm.

Because $G$ is $2$-edge-connected, $e$ lies on a cycle.  The gain around that cycle in the second input is $(a\ b)$ or its inverse, which is the same transposition.  All other gains are identities, so the holonomy subgroup is
$\{\id,(a\ b)\}$ and
\[
F=O\setminus\{a,b\}.
\]
Uniformity makes both inputs compatible. By \cref{thm:main}, the second input has
\[
\CF=1-\frac{k-2}{k}=\frac{2}{k},
\]
whereas the first has $\CF=0$.  Since the algorithm receives identical answers to all its probes, it cannot be correct on both inputs.  Therefore, on the all-identity input, every correct deterministic exact algorithm must probe at least $k-1$ domain points of every edge, for a total of at least $(k-1)m$ probes.
\end{proof}

\begin{corollary}[Worst-case optimality]
\label{cor:static-optimal}
On explicit-table inputs, the algorithm of \cref{prop:static-upper} is
worst-case optimal up to constant factors on $2$-edge-connected graphs.
\end{corollary}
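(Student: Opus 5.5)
The corollary follows by combining the upper bound of \cref{prop:static-upper} with the lower bound of \cref{thm:static-lower}, restricted to the same input model.

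First, \cref{prop:static-upper} shows that the algorithm computes compatibility, $F$, and $\CF(e)$ in $O(k(n+m))$ arithmetic and table operations. Its table probes are a subset of those operations: one length-$k$ array read per tree edge for transports and one per non-tree edge for holonomies. So it uses at most $O(k(n+m))$ probes and $O(k(n+m))$ total work on every input, and in particular on every explicit-table input over a $2$-edge-connected graph.

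Second, \cref{thm:static-lower} shows that on any $2$-edge-connected graph, any correct deterministic exact algorithm needs at least $(k-1)m$ probes in the worst case. This already holds on the subfamily with uniform root distribution, which is compatible for every choice of gains. Any operation-count lower bound is at least the probe-count lower bound, since each probe is one unit-cost operation. Because $2$-edge-connectivity gives $m\ge n$ and $k\ge2$ gives $k-1\ge k/2$, we have
\[
(k-1)m\;\ge\;\tfrac{k}{2}\,m\;\ge\;\tfrac{k}{4}(n+m).
\]
The upper and lower bounds therefore agree up to a constant factor of at most $4$ times the hidden constant in \cref{prop:static-upper}.

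The only delicate point is matching the models. The lower bound holds for the specific task of outputting $\CF$ on the uniform-$p$ subfamily, while the algorithm solves a more general task. A worst-case lower bound on a subfamily of inputs is also a lower bound over the full input class, so the comparison is valid. The case $k=1$ must be set aside, or handled as trivial, because there $\CF=0$ for every input and no probes are needed. I would state this explicitly as the standing hypothesis $k\ge2$ inherited from \cref{thm:static-lower}.
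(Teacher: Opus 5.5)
Your proposal is correct and follows the paper's own (implicit) argument: combine the $O(k(n+m))$ upper bound of \cref{prop:static-upper} with the $(k-1)m=\Omega(k(n+m))$ probe lower bound of \cref{thm:static-lower}, using $m\ge n$ for $2$-edge-connected graphs. Your explicit constant $(k-1)m\ge\tfrac{k}{4}(n+m)$ and your remarks on the uniform-$p$ subfamily and the standing hypothesis $k\ge2$ add precision without changing the approach.
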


\begin{remark}
The lower bound is for deterministic exact algorithms.  It does not claim an
instance-optimal bound on graphs with bridges, nor a lower bound for randomized
approximation algorithms.
\end{remark}

\section{Exact Maintenance under Fixed-Tree Chord Updates}
\label{sec:dynamic}

Fix a root, a spanning tree $T$, its transports $\tau_v$, and a root distribution $p$. Tree edges remain fixed.  Non-tree edges may be inserted or deleted. Each active chord has a unique identifier. An insertion supplies a chord not currently active together with its gain, and a deletion specifies the identifier of a previously inserted chord. Let $m_0$ denote the number of edges present during preprocessing. For an active chord $e=(u,v)$, let
\[
h_e=\tau_v^{-1}\sigma_{uv}\tau_u,
\qquad
\Mov(h_e)=\{a\in O:h_e(a)\neq a\}.
\]

Maintain an integer counter
\[
c(a):=|\{e\text{ active}:a\in\Mov(h_e)\}|,
\]
a running total
\[
W:=\sum_{a:c(a)=0}p(a),
\]
and an integer $B$ equal to the number of active chords whose holonomy does
not preserve $p$.  For each active chord, cache $\Mov(h_e)$ and whether it is
$H$-invariance violating.

\begin{theorem}[Chord-dynamic data structure]
\label{thm:dynamic}
After $O(k(n+m_0))$ preprocessing, the data structure supports:
\begin{enumerate}
\item insertion of a chord given by a dense permutation table in $O(k)$ worst-case time;
\item deletion of a cached chord in $O(k)$ worst-case time;
\item a  compatibility-and-contextual-fraction query in $O(1)$ time.
\end{enumerate}
If the fundamental holonomy is supplied sparsely as the list of pairs
\[
\{(a,h_e(a)):a\in\Mov(h_e)\},
\]
then insertion and deletion take $O(|\Mov(h_e)|)$ time, including the compatibility test. A query reports ``incompatible'' when $B>0$ and otherwise
returns
\[
\CF(e)=1-W.
\]
\end{theorem}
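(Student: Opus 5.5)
The plan is to establish a maintained invariant and then check that each operation restores it within the stated cost. The invariant $\mathcal I$ says three things. First, $c(a)$ counts the active chords moving $a$. Second, $W=\sum_{a:c(a)=0}p(a)$. Third, $B$ counts active chords with $\pushf{h_e}{p}\neq p$, and each active chord carries its cached $\Mov(h_e)$ and violation bit. The correctness of queries then follows from two identities, applied whenever $\mathcal I$ holds.

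The first identity is $\Fix(H)=\{a:c(a)=0\}$. This holds because $H$ is generated by the active chord holonomies, tree edges having identity holonomy, and a point is fixed by a group exactly when it is fixed by every generator. Hence $W=p(F)$.

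The second identity is that $B=0$ if and only if $p$ is $H$-invariant. Invariance under all generators is equivalent to invariance under $H$, as in the proof of \cref{prop:compatibility}. So the query rule is justified by \cref{prop:compatibility} together with \cref{thm:main}: when $B=0$ the model is compatible and $\CF(e)=1-p(F)=1-W$. This takes $O(1)$ time.

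Preprocessing proceeds as in \cref{prop:static-upper}. Build $T$ and all $\tau_v$, $\tau_v^{-1}$ in $O(kn)$. Initialize $c\equiv0$, $W=1$ and $B=0$. Then insert the $m_0-(n-1)$ initial chords using the insertion routine, for $O(km_0)$ in total. The root distribution $p$ is stored as an array indexed by $O$.

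For a dense insertion of a chord $e=(u,v)$ with table $\sigma_{uv}$, compute $h_e(a)=\tau_v^{-1}(\sigma_{uv}(\tau_u(a)))$ pointwise in $O(k)$. From this, form the list $\Mov(h_e)$ and test $p(h_e(a))=p(a)$ for all $a$, setting the violation bit and incrementing $B$ if it fails. Then, for each $a\in\Mov(h_e)$, increment $c(a)$, and if it changed from $0$ to $1$, subtract $p(a)$ from $W$. Deletion reads the cached data and reverses these steps: decrement $B$ if the bit is set, and for each cached $a$ decrement $c(a)$, adding $p(a)$ to $W$ when it returns to $0$. Both operations take $O(k)$ worst-case time, since $|\Mov(h_e)|\leq k$. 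Each update changes exactly the generator set by one element, so $\mathcal I$ is preserved. The counters make this sound even when several chords move the same point; this multiplicity is why counters are used rather than a set.

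The step needing the most care is the sparse case with bound $O(|\Mov(h_e)|)$, specifically the compatibility test. The key observation is that $\pushf{h_e}{p}=p$ if and only if $p(h_e(a))=p(a)$ for all $a\in\Mov(h_e)$, because the condition is trivial on fixed points. Since $h_e$ permutes $\Mov(h_e)$, the supplied pairs suffice, and each check is an $O(1)$ array lookup. The counter and $W$ updates only touch points in $\Mov(h_e)$, so insertion costs $O(|\Mov(h_e)|)$. Deletion uses the cached list, so it too costs $O(|\Mov(h_e)|)$. No step scans all of $O$. I also need to confirm that the cached data makes deletion independent of re-reading the gain table, which holds by construction.
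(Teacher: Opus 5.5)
Your proposal is correct and follows the paper's proof essentially step for step. Both rest on the same elements: the identity $\{a:c(a)=0\}=\Fix(H)$ giving $W=p(F)$; updates of $W$ exactly on $0\to1$ and $1\to0$ counter transitions; the violation count $B$ tested only on $\Mov(h_e)$; deletion by reversing cached data; and the query rule justified by \cref{prop:compatibility} and \cref{thm:main}. Your explicit preprocessing (set $c\equiv0$, $W=1$, $B=0$, then insert the initial chords) fills in a detail the paper leaves implicit.
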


\begin{proof}
A state $a$ belongs to the current common fixed-point set exactly when every
active holonomy fixes it, equivalently when $c(a)=0$.  Thus $W=p(F)$.

On insertion, compute $h_e$ and its moved set.  For every
$a\in\Mov(h_e)$, increment $c(a)$; if the counter changes from $0$ to $1$,
subtract $p(a)$ from $W$.  Test $p(h_e(a))=p(a)$ on the moved set; outside it
the equality is automatic.  If the test fails, increment $B$.

Deletion reverses these operations using the cached moved set and violation
flag.  If a counter changes from $1$ to $0$, add $p(a)$ to $W$.  Hence the
invariants defining $F$, $W$, and $B$ hold after every update.  When $B=0$,
$p$ is invariant under every active generator and therefore under the current
holonomy subgroup; \cref{thm:main} gives $\CF(e)=1-W$.
\end{proof}

\begin{corollary}[Online increment identity]
When a compatible chord is inserted, the decrement of $W$ performed by the
data structure is exactly
\[
p\bigl(F_{\mathrm{old}}\setminus\Fix(h_e)\bigr),
\]
so the update realizes \eqref{eq:increment} without recomputing the old
holonomy intersection.
\end{corollary}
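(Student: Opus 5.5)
The plan is to read the corollary off the insertion procedure in the proof of \cref{thm:dynamic}, using the invariant $W=p(F)$ established there. Before the insertion of a chord $e$ with holonomy $h_e$, the invariant gives $W_{\mathrm{old}}=p(F_{\mathrm{old}})$, where $F_{\mathrm{old}}=\{a:c(a)=0\}$. The insertion subtracts $p(a)$ from $W$ exactly for those $a\in\Mov(h_e)$ whose counter changes from $0$ to $1$, that is, for $a\in\Mov(h_e)$ with $c_{\mathrm{old}}(a)=0$. This set is $F_{\mathrm{old}}\cap\Mov(h_e)=F_{\mathrm{old}}\setminus\Fix(h_e)$, because $\Mov(h_e)$ is the complement of $\Fix(h_e)$ in $O$. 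Each such $a$ is processed once, since $\Mov(h_e)$ is a set, so the total decrement is $p\bigl(F_{\mathrm{old}}\setminus\Fix(h_e)\bigr)$.

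Next, the decrement has to be connected to \eqref{eq:increment}. The hypothesis that the inserted chord is compatible means $p(h_e(a))=p(a)$ for all $a$, so $B$ is not incremented. If the old model was compatible ($B=0$), the new model is compatible as well, and $p$ is invariant under both the old and the new holonomy subgroups. \Cref{cor:increment} then gives $\CF(e)-\CF(e_0)=p(F_0\setminus\Fix(h))$ with $F_0=F_{\mathrm{old}}$. By \cref{thm:dynamic}, each $\CF$ equals $1-W$ at the corresponding time. Hence the increase in $\CF$ is exactly the decrement of $W$, which matches the expression above.

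The only point that needs care is interpretive rather than technical. Here ``compatible chord'' should mean that $h_e$ preserves $p$ and that the model is compatible before the insertion; under that reading \cref{cor:increment} applies. The claim that nothing is recomputed then follows because the procedure touches only $\Mov(h_e)$ and the counters, and never recomputes the old intersection $F_{\mathrm{old}}$ explicitly.
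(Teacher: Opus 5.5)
Your proposal is correct and follows the route the paper intends, since the paper gives no separate proof. The corollary is read off the insertion step in the proof of \cref{thm:dynamic}: $p(a)$ is subtracted exactly for $a\in\Mov(h_e)$ with $c_{\mathrm{old}}(a)=0$, that is, for $a\in F_{\mathrm{old}}\setminus\Fix(h_e)$, and then $W=p(F)$ together with \cref{cor:increment} identifies that decrement with the increase in $\CF$. You also correctly separate the unconditional counting step from the compatibility hypothesis, which is needed only to invoke \eqref{eq:increment}.
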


\subsection{Why tree-edge updates are different}

Changing a tree-edge gain changes $\tau_v$ for every descendant of that edge
and therefore changes every chord holonomy having an endpoint in the affected
subtree.  Deleting a tree edge also requires replacement-tree maintenance.
In the worst case, $\Theta(m)$ cached holonomies may change, so the counter
argument above no longer gives an $O(k)$ update.  Dynamic trees and top-tree
methods support path aggregation for group-valued data
\cite{SleatorTarjan1983,HolmLichtenbergThorup2001}, but maintaining an
intersection of fixed sets under the induced global conjugations requires a
separate analysis.

\begin{openproblem}[Fully dynamic gain updates]
Can $p(\Fix(H))$ be maintained under arbitrary edge insertions, deletions, and
gain changes in $o(km)$ worst-case or amortized update time, while preserving
exact compatibility detection?
\end{openproblem}

\section{Compatibility-Preserving CSP Encoding and the Support Dichotomy}
\label{sec:csp}

A general binary relation does not automatically define compatible uniform
edge distributions.  To transfer the CSP dichotomy without hiding this issue,
we isolate the exact compatibility condition needed by the construction.

The general relationship between contextuality, global sections, and
constraint satisfaction is established in earlier work
\cite{Abramsky2013,AbramskyGottlobKolaitis2013,Simmons2018}. The issue
addressed here is more specific: an arbitrary relational CSP support need not
admit probability tables with consistent one-variable marginals. The
contribution of this section is a compatibility-preserving probabilistic
encoding for a precisely defined class of binary languages and an exact
transfer of the finite-domain dichotomy to the resulting promised
empirical-model problem.

\begin{definition}[Common-marginal realizable language]
A finite binary constraint language $\Gamma$ on $O$ is
\emph{common-marginal realizable} if there exist a distribution
$\pi\in\D(O)$ and, for each $R\in\Gamma$, a distribution
$\mu_R\in\D(O\times O)$ such that
\[
\supp(\mu_R)=R,
\qquad
\mu_R|_1=\mu_R|_2=\pi.
\]
The realization $(\pi,\{\mu_R\}_{R\in\Gamma})$ is fixed as part of the
language.
\end{definition}

Regular relations under the uniform distribution are examples: permutation relations, equality, disequality, and every binary relation whose bipartite support is regular on both sides admit such realizations.  The condition is not automatic for arbitrary relations.  It constrains the probability tables used to form a compatible empirical model, while leaving the logical CSP support unchanged.  Thus the transfer below is exhaustive within the common-marginal realizable class, not within all probabilistic encodings of all binary languages.

For a fixed common-marginal realization, let $\textsc{CF-Support}(\Gamma)$ denote the promised decision problem whose inputs are edge-labelled empirical models assembled from the fixed tables $\{\mu_R:R\in\Gamma\}$ and the equality table $\mu_{=}$ by the construction below, and whose question is whether $\CF<1$.

An arbitrary binary CSP instance may contain several constraint occurrences sharing the same variable pair. To obtain an ordinary edge cover without losing compatibility, split variable occurrences.  For every incidence of a variable $x$ in a constraint $C$, create a copy $x_C$.  For a constraint $C=R(x,y)$, place the table $\mu_R$ on the edge $(x_C,y_C)$.  For each original variable $x$, connect all copies $\{x_C:C\ni x\}$ by a tree of equality edges, using
\[
\mu_{=}(a,b):=\pi(a)\1[a=b].
\]
Every local table now has marginal $\pi$ at both endpoints, and each edge has a unique scope.

\begin{theorem}[CSP--contextual-support equivalence]
\label{thm:csp-transfer}
Let $\Gamma$ be common-marginal realizable and let $e^I$ be the empirical model obtained from a binary $\CSP(\Gamma)$ instance $I$ by occurrence splitting and equality trees.  Then $e^I$ is compatible and
\begin{equation}
\label{eq:csp-equivalence}
I\text{ is satisfiable}
\quad\Longleftrightarrow\quad
\NCF(e^I)>0
\quad\Longleftrightarrow\quad
\CF(e^I)<1.
\end{equation}
Moreover,
\[
\CSP(\Gamma)
\equiv_{\mathrm p}
\textsc{CF-Support}(\Gamma)
\equiv_{\mathrm p}
\CSP(\Gamma\cup\{=\}),
\]
where the equivalences are polynomial-time many-one reductions on the stated promised input representation.
\end{theorem}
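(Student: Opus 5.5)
Compatibility comes first and is local. Every edge of $e^I$ is either a constraint edge carrying $\mu_R$ or an equality edge carrying $\mu_{=}$. Both tables have marginal $\pi$ at each endpoint: for $\mu_R$ this is the common-marginal hypothesis, and for $\mu_{=}$ it is immediate from $\mu_{=}(a,b)=\pi(a)\1[a=b]$. Contexts of the edge cover overlap in at most one vertex, so all overlapping marginals equal $\pi$ and the family is compatible. Occurrence splitting guarantees that each edge has a single scope and carries exactly one table, so there are no clashes between parallel constraints.

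For \eqref{eq:csp-equivalence}, I invoke the support characterization from the introduction, which follows from \cref{prop:lp}: $\NCF(e)>0$ iff some global assignment $g$ has $e_C(g|_C)>0$ on every context. Here this means $(g(x_C),g(y_C))\in\supp(\mu_R)=R$ on every constraint edge, and $g(x_C)=g(x_{C'})$ on every equality edge, since $\supp(\mu_{=})$ is the diagonal restricted to $\supp\pi$.

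The step needing care is this $\supp\pi$ restriction. I plan to observe that $\supp(\mu_R)=R$ with first marginal $\pi$ forces the projections of $R$ to lie inside $\supp\pi$. Hence any satisfying assignment of $I$ takes values in $\supp\pi$ on every constrained variable. Unconstrained variables create no copies and can be ignored.

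With that in hand, both directions of \eqref{eq:csp-equivalence} go through. Given a satisfying $s$, set $g(x_C)=s(x)$. Conversely, the equality trees are connected, so a good $g$ is constant on each copy class and defines a satisfying $s$. The last equivalence $\NCF>0\iff\CF<1$ is the definition of $\CF$.

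For the reductions, the chain $\CSP(\Gamma)\le_p\textsc{CF-Support}(\Gamma)$ is the construction itself. It is polynomial because the number of copies equals the number of constraint incidences and the tables are fixed.

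For $\textsc{CF-Support}(\Gamma)\le_p\CSP(\Gamma\cup\{=\})$, I read off the support of each edge table. That support is $R$ or the diagonal on $\supp\pi$. I then need to express the latter with $=$. I intend to argue that adding full $=$ instead is harmless on promised inputs: each equality class either touches a constraint edge, and is then confined to $\supp\pi$ by the projection remark, or it is an isolated promised component. For an isolated component I would either note that the promise forbids it or fix it by choosing any $a\in\supp\pi$.

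To close the loop, I reduce $\CSP(\Gamma\cup\{=\})\le_p\CSP(\Gamma)$ by the standard elimination of equality constraints: contract the variables connected by $=$ via union-find. This yields the stated equivalences, and the Bulatov--Zhuk dichotomy then transfers.

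I expect the main obstacle to be exactly this bookkeeping around $\supp\pi$ versus all of $O$ and around equality. Done carelessly, it would make the equivalence with $\CSP(\Gamma\cup\{=\})$ fail on variables that appear in no constraint, so I would check it explicitly.
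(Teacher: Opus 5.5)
Your proposal is correct and matches the paper's proof: compatibility from the common marginal $\pi$ on every edge, the primal-LP support characterization, the observation that the projections of $R=\supp(\mu_R)$ lie in $\supp\pi$ (so the selected equality events have positive mass), and connectedness of the equality trees for the converse. The only difference is how the reductions are arranged: you close the cycle $\CSP(\Gamma)\to\textsc{CF-Support}(\Gamma)\to\CSP(\Gamma\cup\{=\})\to\CSP(\Gamma)$ by reading off supports and then contracting equalities, whereas the paper just contracts each equality tree to recover $I$ and leaves the $\CSP(\Gamma\cup\{=\})$ leg and its $\supp\pi$-versus-$O$ subtlety implicit.
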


\begin{proof}
All relation and equality tables have marginal $\pi$ at both endpoints, so the model is compatible.

If $g$ satisfies the original CSP instance, assign every occurrence copy $x_C$ the value $g(x)$.  This assignment satisfies every relation edge and every equality edge.  Every selected relation event has positive probability. Moreover, because $\supp(\mu_R)=R$, every value appearing in a selected relation tuple has positive $\pi$-mass, so every selected equality event also has positive probability. Letting $\rho$ be the minimum of these finitely many positive probabilities; letting $\rho$ be the minimum of these finitely many positive probabilities and assigning primal weight $\rho$ to this global column gives $\NCF(e^I)>0$.

Conversely, if $\NCF(e^I)>0$, some global column has positive primal weight.
It cannot select a zero-probability event.  Equality edges therefore force all
copies of each original variable to have the same value, and relation edges
then show that those values satisfy every original constraint.  The number of
copy variables and equality edges is linear in the number of constraint
incidences.  On models produced by the construction, contracting each equality
tree recovers the original instance.
\end{proof}

\begin{corollary}[Transferred finite-domain dichotomy]
\label{cor:csp-dichotomy}
For every fixed common-marginal realizable finite language $\Gamma$, deciding
whether $\CF(e^I)<1$ is either polynomial-time solvable or NP-complete.  The
side of the dichotomy is exactly the side occupied by $\CSP(\Gamma)$ in the
Bulatov--Zhuk classification \cite{Bulatov2017,Zhuk2020}.  Algebraically,
after the standard passage to a core and addition of constants, the tractable
side is characterized by a Taylor, equivalently weak near-unanimity,
polymorphism.
\end{corollary}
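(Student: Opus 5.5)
The corollary follows by chaining \cref{thm:csp-transfer} with the Bulatov--Zhuk theorem. The one point needing care is that the theorem's equivalences go through $\CSP(\Gamma\cup\{=\})$, so I must show that adding equality does not change the side of the dichotomy.

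\textbf{Step 1 (membership).} $\textsc{CF-Support}(\Gamma)$ is in NP by \cref{prop:general-complexity}: a global assignment selecting only positive-probability events on every edge is a polynomial-size certificate. It can also be seen from the reduction to $\CSP(\Gamma\cup\{=\})$ in \cref{thm:csp-transfer}.

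\textbf{Step 2 (equality does not move the side).} Equality constraints can be eliminated by contraction. Given an instance of $\CSP(\Gamma\cup\{=\})$, merge the connected components of the equality constraints into single variables using union--find, and rename variables in the $\Gamma$-constraints accordingly. This is polynomial and preserves satisfiability. So $\CSP(\Gamma\cup\{=\})\le_{\mathrm p}\CSP(\Gamma)$, and the reverse reduction is trivial.

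One corner case: merging may produce a constraint $R(x,x)$ with both arguments equal. Such constraints are already allowed in $\CSP(\Gamma)$ instances, where scopes may repeat a variable. If one insists on distinct variables, the polymorphism clone of $\Gamma$ is unchanged when equality is added, since equality is pp-definable by the empty conjunction. Bulatov--Zhuk then assigns $\Gamma$ and $\Gamma\cup\{=\}$ the same complexity class.

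\textbf{Step 3 (transfer).} Combining the equivalences of \cref{thm:csp-transfer} with Step 2 gives
\[
\textsc{CF-Support}(\Gamma)\equiv_{\mathrm p}\CSP(\Gamma).
\]
The dichotomy \cite{Bulatov2017,Zhuk2020} says $\CSP(\Gamma)$ is either in P or NP-complete. Because polynomial-time solvability and NP-completeness are both closed under $\equiv_{\mathrm p}$, this transfers verbatim, and $\textsc{CF-Support}(\Gamma)$ lands on the same side as $\CSP(\Gamma)$. The algebraic sentence is quoted directly from the classification: pass to the core of $\Gamma$ and add the singleton unary relations. The problem is then tractable iff the resulting language has a Taylor polymorphism, which by Mar\'oti--McKenzie is equivalent to having a weak near-unanimity polymorphism. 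No new argument is needed there.

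\textbf{Main obstacle: the promise.} $\textsc{CF-Support}(\Gamma)$ is a promise problem, with inputs restricted to models built by the stated construction. Three points need checking.

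\begin{enumerate}
\item The reduction from $\CSP(\Gamma)$ outputs only promised instances. This is by construction.
\item The reverse reduction contracts equality trees. Its correctness is needed only on promised inputs, and \cref{thm:csp-transfer} asserts exactly this.
\item Membership in P on the tractable side means an algorithm that is correct on promised inputs.
\end{enumerate}

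I would state explicitly that "polynomial-time solvable or NP-complete" is meant in this promise sense. NP-completeness uses the forward reduction together with Step 1.
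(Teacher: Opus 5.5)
Your proposal is correct and takes the paper's route. The paper gives no separate argument: it reads the corollary off the equivalence $\CSP(\Gamma)\equiv_{\mathrm p}\textsc{CF-Support}(\Gamma)$ in \cref{thm:csp-transfer} together with Bulatov--Zhuk, so your Step~2 is redundant (the theorem already handles equality by contracting the equality trees) but harmless. One wording fix: $\mathrm{Pol}(\Gamma)=\mathrm{Pol}(\Gamma\cup\{=\})$ holds because every operation trivially preserves equality, not because equality is ``pp-definable by the empty conjunction'' (in two free variables the empty conjunction defines $O\times O$).
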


\begin{example}[The tractable and hard boundaries]
For a language consisting of permutation relations, the uniform distribution
is a common marginal, and satisfiability reduces to propagation along a
spanning forest followed by cycle-consistency checks. This is the tractable
gain-graph side developed in \cref{sec:model,sec:static}.

In contrast, for $O=\{1,2,3\}$ and $\Gamma=\{\neq\}$, the uniform
distribution on ordered unequal pairs is a common-marginal realization and
$\CSP(\Gamma)$ is graph $3$-colourability. Hence the corresponding
$\textsc{CF-Support}(\Gamma)$ problem is NP-complete, recovering the hard
example in \cref{prop:general-complexity}.
\end{example}

\begin{remark}[Boundary of the transfer]
The dichotomy concerns the support threshold $\CF<1$ versus $\CF=1$.  It does
not classify exact computation or approximation of $\CF$ in the tractable CSP
cases.  It also does not apply to arbitrary probability tables lacking a
common-marginal realization.
\end{remark}

\section{Discussion, Scope, and Open Problems}
\label{sec:scope}

The contribution is computational rather than a reinvention of gain-graph
holonomy.  The chain
\[
\text{contextual-fraction LP}
\longrightarrow
\text{holonomy fixed-set mass}
\longrightarrow
\text{query-optimal exact algorithm}
\]
is specific to edge supports that are graphs of bijections generated from one root distribution.  The lower bound applies to deterministic exact algorithms on explicit permutation tables and to cyclic graphs; randomized approximation and compressed group representations may have different complexity.

From the viewpoint of structured computation, the main mechanism is an exact reduction of an exponentially indexed linear program to a graph traversal and an intersection of finite fixed sets. This is a structure-exploiting optimization result: symmetry and transport constraints replace global enumeration by a low-dimensional algebraic invariant. The matching query lower bound further shows that, under explicit table storage, the remaining linear scan is information-theoretically unavoidable.

The chord-dynamic result is similarly precise. It assumes a fixed spanning tree and updates only non-tree edges.  A tree-edge update can change the transport of an entire subtree and therefore many cached holonomies.  Whether this interaction can be handled with link--cut or top-tree methods without recomputing a linear number of fixed-set constraints remains open.

The CSP transfer classifies only the support threshold $\CF<1$ versus $\CF=1$.  It does not classify exact values or additive approximation in the tractable CSP cases.  The common-marginal condition is a genuine probabilistic restriction: it contains permutation, equality, disequality, and all regular binary relations under uniform weighting, but not every binary relation.

\section{Conclusion}
\label{sec:conclusion}

For probability-weighted permutation gain graphs,
\[
\NCF(e)=p(\Fix(H)),\qquad \CF(e)=1-p(\Fix(H)).
\]
This identity removes the $|O|^{|V|}$ global-assignment representation. The resulting exact algorithm is worst-case query optimal on $2$-edge-connected explicit-table inputs, and its value can be maintained in $O(|O|)$ time under fixed-tree chord updates.  Under relational relaxation, the support threshold inherits the finite-domain CSP dichotomy within the common-marginal realizable class.

The paper therefore identifies a rigorous computational tractability island: a generally NP-complete support problem with an exponentially indexed exact linear program becomes quantitatively solvable in linear arithmetic time, optimal with respect to explicit table access, and maintainable under a natural chord-update model.

\appendix
\section{A Concrete State-Space Reduction}
\label{app:example}

Let $G_n=C_n$, let $O=\{1,\ldots,8\}$, put identity gains on a spanning path,
and put the cycle $(1\ 2\ 3)$ on the closing edge.  For uniform $p$,
\[
F=\{4,5,6,7,8\},\qquad
\NCF(e_n)=\frac58,\qquad
\CF(e_n)=\frac38.
\]
The general LP has $8^n$ global columns, while the structural calculation
scans one length-$8$ nontrivial holonomy after a linear graph traversal.

\begin{center}
\begin{tabular}{@{}r r r@{}}
\toprule
$n$ & Global assignments $8^n$ & Scale $8(|V|+|E|)=16n$\\
\midrule
$10$ & $1{,}073{,}741{,}824$ & $160$\\
$25$ & $3.78\times10^{22}$ & $400$\\
$50$ & $1.43\times10^{45}$ & $800$\\
$100$ & $2.04\times10^{90}$ & $1600$\\
\bottomrule
\end{tabular}
\end{center}

The last column records the scale in the proved asymptotic bound; it is not an exact instruction count.  The example demonstrates combinatorial state-space collapse, while \cref{thm:static-lower} shows that the remaining linear scan of explicit edge tables is unavoidable in the worst case.

%
%

\bibliographystyle{unsrtnat}
\bibliography{refs}

@article{Abramsky2013,
	author  = {Abramsky, Samson},
	title   = {Relational Hidden Variables and Non-Locality},
	journal = {Studia Logica},
	volume  = {101},
	number  = {2},
	pages   = {411--452},
	year    = {2013},
	doi     = {10.1007/s11225-013-9477-4},
	url     = {https://doi.org/10.1007/s11225-013-9477-4}
}

@inproceedings{AbramskyGottlobKolaitis2013,
	author    = {Abramsky, Samson and Gottlob, Georg and Kolaitis, Phokion G.},
	title     = {Robust Constraint Satisfaction and Local Hidden Variables in Quantum Mechanics},
	booktitle = {Proceedings of the Twenty-Third International Joint Conference on Artificial Intelligence},
	pages     = {440--446},
	year      = {2013}
}

@article{Simmons2018,
	author  = {Simmons, Adam W.},
	title   = {On the Computational Complexity of Detecting Possibilistic Locality},
	journal = {Journal of Logic and Computation},
	volume  = {28},
	number  = {1},
	pages   = {203--217},
	year    = {2018},
	doi     = {10.1093/logcom/exx045},
	url     = {https://doi.org/10.1093/logcom/exx045}
}

@article{AbramskyBrandenburger2011,
	author  = {Abramsky, Samson and Brandenburger, Adam},
	title   = {The Sheaf-Theoretic Structure of Non-Locality and Contextuality},
	journal = {New Journal of Physics},
	volume  = {13},
	number  = {11},
	pages   = {113036},
	year    = {2011},
	doi     = {10.1088/1367-2630/13/11/113036},
	url     = {https://doi.org/10.1088/1367-2630/13/11/113036}
}

@article{AbramskyBarbosaMansfield2017,
	author  = {Abramsky, Samson and Barbosa, Rui Soares and Mansfield, Shane},
	title   = {The Contextual Fraction as a Measure of Contextuality},
	journal = {Physical Review Letters},
	volume  = {119},
	number  = {5},
	pages   = {050504},
	year    = {2017},
	doi     = {10.1103/PhysRevLett.119.050504},
	url     = {https://doi.org/10.1103/PhysRevLett.119.050504}
}

@inproceedings{Bulatov2017,
	author    = {Bulatov, Andrei A.},
	title     = {A Dichotomy Theorem for Nonuniform {CSP}s},
	booktitle = {Proceedings of the 58th IEEE Symposium on Foundations of Computer Science},
	series    = {FOCS 2017},
	pages     = {319--330},
	publisher = {IEEE},
	year      = {2017},
	doi       = {10.1109/FOCS.2017.37},
	url       = {https://doi.org/10.1109/FOCS.2017.37}
}

@book{GrossTucker1987,
	author    = {Gross, Jonathan L. and Tucker, Thomas W.},
	title     = {Topological Graph Theory},
	publisher = {John Wiley \& Sons},
	address   = {New York},
	year      = {1987}
}

@article{HolmLichtenbergThorup2001,
	author  = {Holm, Jacob and de Lichtenberg, Kristian and Thorup, Mikkel},
	title   = {Poly-Logarithmic Deterministic Fully-Dynamic Algorithms for Connectivity, Minimum Spanning Tree, 2-Edge, and Biconnectivity},
	journal = {Journal of the ACM},
	volume  = {48},
	number  = {4},
	pages   = {723--760},
	year    = {2001},
	doi     = {10.1145/502090.502095},
	url     = {https://doi.org/10.1145/502090.502095}
}

@misc{Montanhano2021,
	author        = {Montanhano, Samuel Barbosa},
	title         = {Contextuality in the Bundle Approach, {$n$}-Contextuality, and the Role of Holonomy},
	year          = {2021},
	eprint        = {2105.14132},
	archiveprefix = {arXiv},
	primaryclass  = {quant-ph},
	note          = {Revised 2024},
	doi           = {10.48550/arXiv.2105.14132},
	url           = {https://doi.org/10.48550/arXiv.2105.14132}
}

@article{SleatorTarjan1983,
	author  = {Sleator, Daniel D. and Tarjan, Robert E.},
	title   = {A Data Structure for Dynamic Trees},
	journal = {Journal of Computer and System Sciences},
	volume  = {26},
	number  = {3},
	pages   = {362--391},
	year    = {1983},
	doi     = {10.1016/0022-0000(83)90006-5},
	url     = {https://doi.org/10.1016/0022-0000(83)90006-5}
}

@article{Vorobev1962,
	author  = {Vorob'ev, Nikolai Nikolaevich},
	title   = {Consistent Families of Measures and Their Extensions},
	journal = {Theory of Probability and Its Applications},
	volume  = {7},
	number  = {2},
	pages   = {147--163},
	year    = {1962},
	doi     = {10.1137/1107014},
	url     = {https://doi.org/10.1137/1107014}
}

@article{Zaslavsky1989,
	author  = {Zaslavsky, Thomas},
	title   = {Biased Graphs. {I}. Bias, Balance, and Gains},
	journal = {Journal of Combinatorial Theory, Series B},
	volume  = {47},
	number  = {1},
	pages   = {32--52},
	year    = {1989},
	doi     = {10.1016/0095-8956(89)90063-4},
	url     = {https://doi.org/10.1016/0095-8956(89)90063-4}
}

@article{Zaslavsky2009,
	author  = {Zaslavsky, Thomas},
	title   = {Totally Frustrated States in the Chromatic Theory of Gain Graphs},
	journal = {European Journal of Combinatorics},
	volume  = {30},
	number  = {1},
	pages   = {133--156},
	year    = {2009},
	doi     = {10.1016/j.ejc.2008.02.004},
	url     = {https://doi.org/10.1016/j.ejc.2008.02.004}
}

@article{ZaslavskyGlossary,
	author  = {Zaslavsky, Thomas},
	title   = {A Mathematical Bibliography of Signed and Gain Graphs and Allied Areas},
	journal = {Electronic Journal of Combinatorics},
	number  = {DS8},
	year    = {1998},
	note    = {Dynamic Survey; revised editions},
	doi     = {10.37236/27},
	url     = {https://doi.org/10.37236/27}
}

@article{Zhuk2020,
	author  = {Zhuk, Dmitriy},
	title   = {A Proof of the {CSP} Dichotomy Conjecture},
	journal = {Journal of the ACM},
	volume  = {67},
	number  = {5},
	articleno = {30},
	pages   = {1--78},
	year    = {2020},
	doi     = {10.1145/3402029},
	url     = {https://doi.org/10.1145/3402029}
}
\end{document}